\documentclass[
  aps,
  prd,
  reprint,
  nobalancelastpage,
  superscriptaddress,
  nofootinbib,
  longbibliography
]{revtex4-2}

\usepackage[T1]{fontenc}
\usepackage{amsmath,amssymb}
\usepackage{mathtools}
\usepackage{bm}
\usepackage{amsthm}
\usepackage{graphicx}
\usepackage{xcolor}
\usepackage[colorlinks=true,allcolors=blue]{hyperref}
\usepackage{cleveref}

\newtheorem{theorem}{Theorem}
\newcommand{\id}{\mathbf{1}}
\newcommand{\Tr}{\operatorname{Tr}}
\newcommand{\ii}{\mathrm{i}}
\newcommand{\e}{\mathrm{e}}
\newcommand{\cH}{\mathcal{H}}
\newcommand{\cL}{\mathcal{L}}

\newcommand{\dd}{\mathrm{d}}

\makeatletter
\renewcommand{\bibsection}{%
  \par
  \addvspace{12\p@}%
  \noindent
  \makebox[\columnwidth][c]{%
    \rule{0.55\columnwidth}{0.5\p@}%
  }%
  \par
  \nobreak
  \addvspace{10\p@}%
}
\makeatother

\begin{document}

\title{Inequalities and Positivity in Modular Flowed Entanglement Entropy}

\author{Liangyu Chen}
\email{liangyu-chen@mail.tsinghua.edu.cn}
\affiliation{Yau Mathematical Sciences Center, Tsinghua University, Beijing 100084, China}
\date{\today}

\begin{abstract}
The modular Hamiltonian plays an important role in quantum information theory, quantum field theory, and the AdS/CFT correspondence. In this paper, we study a specific dynamical setup: starting from the tensor product of the reduced density matrices of a reference state, we evolve this product state under the modular flow generated by the reference state itself. We then investigate the resulting change in the entanglement entropy of a given region, which we call the modular entanglement shift (MES). Unlike evolution under a fixed quantum channel, this modular evolution obeys no general principle requiring the MES to have a definite sign. For a bipartition into subsystems A and B, we establish an exact entropy balance: the sum of the MESs for A and B equals the mutual information generated along the orbit. Consequently, their sum is nonnegative, even though either individual shift may be negative. For two disjoint intervals in the vacuum of a two-dimensional conformal field theory, however, we prove that a global conformal involution exchanging the intervals forces the two MESs to be equal. It follows that each MES is nonnegative along the modular flow. We establish this result both using a conformally transported regulator and intrinsically in terms of Araki relative entropy. Finally, we identify the modular-flow orbit with a Connes cocycle orbit and discuss possible applications to the AdS/CFT correspondence.
\end{abstract}

\maketitle

\section{Introduction}

For a faithful density matrix $\rho$, the modular Hamiltonian is
$K_\rho=-\log\rho$; in QFT, where local algebras are generically type III,
the intrinsic object is the Tomita--Takesaki modular automorphism group
\cite{Takesaki:1970mod,Haag:1996hvx,Witten:2018zxz}.  Vacuum modular flow is
geometric for a Rindler half-space and, by conformal transport, for a ball
and related 2D geometries \cite{Bisognano:1975ih,Bisognano:1976za,
Hislop:1982mod,Casini:2011kv,Cardy:2016eh,Arias:2016local}.  For generic states
or disconnected regions the generator is nonlocal, as exact multi-interval
examples demonstrate \cite{Casini:2009multi,Longo:2009geo,Wong:2018glue,
Arias:2018scalar,Hollands:2019multi}.  This structure also drives many-body
studies of entanglement Hamiltonians \cite{Dalmonte:2022rlo,Kokail:2020eht}.

Perturbations of geometric examples yield stress-tensor expressions for
shape-deformed modular Hamiltonians and entropic proofs of averaged energy
conditions \cite{Faulkner:2015csl,Faulkner:2016mzt}, with a recent nonlocal
extension to deformed null cuts \cite{Lu:2025eqy}.  The exact Markov property
on null planes \cite{Casini:2017roe} underlies the QNEC, proposed through
quantum focusing and subsequently proved \cite{Bousso:2015qfc,
Bousso:2015qne,Balakrishnan:2017bjg}, and connects it to half-sided modular
inclusions \cite{Wiesbrock:1993hsm,Ceyhan:2018zfg}.  For disconnected
regions, small-cross-ratio expansions and free-fermion transport provide
perturbative and exact control \cite{Cardy:2013mi,Chandrasekaran:2021tkb,
Agon:2015mi,Chen:2022cmr,Chen:2022mpt}.

In holography, JLMS proposal and quantum error correction relate boundary and bulk
modular generators and support entanglement-wedge reconstruction
\cite{Faulkner:2013ana,Jafferis:2014lza,Jafferis:2015del,Almheiri:2014lwa,
Dong:2016eik,Harlow:2016vwg,Faulkner:2017vdd,Cotler:2017rec,
Faulkner:2018faa}.  Nonlocal flow probes the causal shadow, while modular
transport and chaos encode bulk geometry \cite{Chen:2022eyi,Czech:2019vih,
DeBoer:2019kdj,Czech:2023zmq}.  Relative modular operators and Connes
cocycles supply a regulator-independent language in QFT and gravity
\cite{Connes:1973coc,Bousso:2020yxi,Leutheusser:2021frk,
Witten:2021unn}. 

In this paper, we ask how local entanglement changes when a decorrelated
state is transported by a correlated modular Hamiltonian. The reference state $\tau_{AB}$ is stationary under its own flow.  We retain
its marginals and modular generator but remove its initial correlations,
evolving $\tau_A\otimes\tau_B$ to measure how the nonlocal part of $K_\tau$
rebuilds correlations and redistributes entropy.  The construction is finite
in type I and extends to type III through Umegaki and Araki relative entropy
\cite{Umegaki:1962rel,Araki:1975rel}.

Let $\tau_{AB}>0$ be a faithful reference state and
$\sigma_{AB}=\tau_A\otimes\tau_B$ the product of its marginals.  We study
the modular orbit
\begin{equation}
 \omega_{AB}(s)
 =\e^{-\ii sK_\tau}\sigma_{AB} \, \e^{\ii sK_\tau},
 \qquad K_\tau=-\log\tau_{AB},
 \label{eq:main-orbit}
\end{equation}
and define the modular entanglement shift (MES)
\begin{equation}
 \Delta S_X(s):=S(\omega_X(s))-S(\tau_X),
 \qquad X=A,B.
 \label{eq:main-mes}
\end{equation}
Although Eq.~\eqref{eq:main-orbit} resembles a Stinespring dilation
\cite{Stinespring:1955cp}, the
assignment from $\tau_{AB}$ to $\omega_A(s)$ is not a fixed quantum channel:
the input, environment, and generator all depend on the reference state.
The usual entropy theorem for unital channels therefore does not determine
the sign of Eq.~\eqref{eq:main-mes}.

We establish three results.  First, the two local MES values obey an exact
balance law: their sum equals the mutual information generated along the
orbit, although either individual shift may be negative.  At short modular
time, the curvature of this total shift defines a positive
Bogoliubov--Kubo--Mori modular susceptibility.  Second, any pair of disjoint
vacuum intervals on the compactified line admits a M\"obius involution that
exchanges them.  Vacuum covariance then equates the two shifts: with a
conformally adapted type-I regulator, each is separately non-negative, while
the intrinsic continuum statement is expressed through the algebraic MES
defined by Araki relative entropy.  Third,
Eq.~\eqref{eq:main-orbit} is precisely the state orbit generated by the
Connes cocycle of $\tau_{AB}$ relative to $\sigma_{AB}$.

\section{Entropy balance along the modular flow}
\label{sec:main-balance}
Since the modular flow is a unitary operation for the whole $AB$ system, so the entropy of $AB$ is invariant along the modular flow. Notice that the initial state $\sigma_{AB}$is a product state. We immediately get the following inequality.
\begin{theorem}[Entropy balance]
For every real modular time $s$,
\begin{equation}
 \Delta S_A(s)+\Delta S_B(s)
 =I(A:B)_{\omega(s)}\geq0.
 \label{eq:main-balance}
\end{equation}
\end{theorem}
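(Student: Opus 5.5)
The plan is to write the mutual information of $\omega(s)$ in terms of the three entropies and then use two facts: the global entropy is invariant under the unitary modular flow, and the initial state is a product of the marginals of $\tau$. In finite dimensions (type I), every step is an identity among von Neumann entropies. By definition,
\begin{equation}
 I(A:B)_{\omega(s)} = S(\omega_A(s)) + S(\omega_B(s)) - S(\omega_{AB}(s)).
\end{equation}

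The first step is to evaluate $S(\omega_{AB}(s))$. Since $U_s=\e^{-\ii sK_\tau}$ is unitary on $\cH_A\otimes\cH_B$, $\omega_{AB}(s)=U_s\sigma_{AB}U_s^\dagger$ has the same spectrum as $\sigma_{AB}$. Hence $S(\omega_{AB}(s))=S(\sigma_{AB})$ for every real $s$. Because $\sigma_{AB}=\tau_A\otimes\tau_B$ and entropy is additive on tensor products, this equals $S(\tau_A)+S(\tau_B)$.

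Substituting into the definition gives
\begin{equation}
 I(A:B)_{\omega(s)} = \bigl[S(\omega_A(s))-S(\tau_A)\bigr]+\bigl[S(\omega_B(s))-S(\tau_B)\bigr],
\end{equation}
which is exactly $\Delta S_A(s)+\Delta S_B(s)$ by Eq.~\eqref{eq:main-mes}. Nonnegativity follows from subadditivity of von Neumann entropy, or equivalently from Klein's inequality, since $I(A:B)_\omega = S(\omega_{AB}\|\omega_A\otimes\omega_B)\geq 0$.

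There is no real obstacle in type I. The only point needing care is that all the quantities are finite, which holds in finite dimensions, or whenever $S(\tau_A)$ and $S(\tau_B)$ are finite. In that case the entropies of the reduced states $\omega_X(s)$ are also finite by the Araki--Lieb bound. An intrinsic type-III version would instead be phrased through the relative-entropy form $S(\omega_{AB}\|\omega_A\otimes\omega_B)$, but the stated identity is the type-I computation above.
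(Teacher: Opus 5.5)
Your argument is correct and is exactly the paper's proof: unitary invariance and additivity give $S(\omega_{AB}(s))=S(\sigma_{AB})=S(\tau_A)+S(\tau_B)$, this is substituted into $I(A:B)_\omega=S(\omega_A)+S(\omega_B)-S(\omega_{AB})$, and nonnegativity is subadditivity. One side remark is wrong, though it does not affect the finite-dimensional statement: Araki--Lieb only bounds $|S(\omega_A(s))-S(\omega_B(s))|$ and does not make the flowed marginal entropies finite in infinite dimensions (a unitary can take a pure product state to an entangled pure state whose marginals have infinite entropy), so in that setting finiteness would have to be assumed rather than derived.
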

The proof is simple, the unitary invariance and additivity imply
$S(\omega_{AB}(s))=S(\sigma_{AB})=S(\tau_A)+S(\tau_B)$; inserting this into
$I(A:B)_\omega=S(\omega_A)+S(\omega_B)-S(\omega_{AB})$ proves
Eq.~\eqref{eq:main-balance}. And the equality condition for $I(A:B)_{\omega(s)} =0$ is:
\begin{equation}
 \omega_{AB}(s)=\omega_A(s)\otimes\omega_B(s).
 \label{eq:main-balance-saturation}
\end{equation}
From this inequality, we see that the local purification $\Delta S_{A}(s) <0 $ is possible only if it
is overcompensated by entropy growth in the complementary subsystem. This is very different from the usual quantum channel. If a quantum channel is unital, then one always has $S(\Phi(\rho)) \ge S(\rho)$. For fixed reference state $\tau_{AB}$ and modular time $s$,
the reduced map
\begin{equation}
 \Phi^{(\tau)}_{A,s}(\rho_A)
 =\Tr_B\!\left[
 \e^{-\ii sK_\tau}(\rho_A\otimes\tau_B) \, \e^{\ii sK_\tau}
 \right]
 \label{eq:fixed-modular-channel}
\end{equation}
is indeed CPTP, but changing the reference state $\tau_{AB}$ simultaneously
changes the input $\tau_A\otimes\tau_B$, the environment $\tau_B$, and the
generator $K_\tau$.  Hence the assignment from reference data to
modular-flowed data is not described by a single quantum channel. 
The exact local criterion follows instead from relative entropy.  Define
$K_A=-\log\tau_A$ and
\begin{equation}
 \Delta\langle K_A\rangle_s
 :=\Tr[(\omega_A(s)-\tau_A)K_A].
 \label{eq:sm-local-energy}
\end{equation}
Expanding
$D(\omega_A(s)\Vert\tau_A)
=\Tr[\omega_A(s)(\log\omega_A(s)-\log\tau_A)]$ gives \cite{Blanco:2013rel,Blanco:2017mei}
\begin{equation}
 \Delta S_A(s)=\Delta\langle K_A\rangle_s
 -D(\omega_A(s)\Vert\tau_A).
 \label{eq:sm-local-identity}
\end{equation}
Consequently,
\begin{equation}
 \Delta S_A(s)<0
 \quad\Longleftrightarrow\quad
 \Delta\langle K_A\rangle_s
 <D(\omega_A(s)\Vert\tau_A).
 \label{eq:sm-purification-criterion}
\end{equation}

Notice that the sum of two MES is a mutual information, then we  can apply the standard BKM susceptibility technique to the two local entropy changes. The first order perturbative around $s=0$ is zero, $\Delta S_A'(0)=\Delta S_B'(0)=0$.  Its leading correlation is
quadratic.  Let
\begin{equation}
 X=-\ii[K_\tau,\sigma_{AB}],\quad
 X_{\rm corr}=X-X_A\otimes\tau_B-\tau_A\otimes X_B,
\end{equation}
where $X_A=\Tr_BX$ and $X_B=\Tr_AX$, then
\begin{align}
 I(A:B)_{\omega(s)}
 &=\frac{s^2}{2}\,\chi_{\rm mod}(\tau)+O(s^3),
 \label{eq:main-susceptibility}\\
 \chi_{\rm mod}(\tau)
 &=\Tr[X_{\rm corr}\mathcal T_\sigma(X_{\rm corr})]\geq0.
 \label{eq:main-susceptibility-definition}
\end{align}
Here $\mathcal T_\rho$ is the Fr\'echet derivative of $\log\rho$.

\section{An exactly solvable two-qubit orbit}
\label{sec:main-qubit}
Let us consider a two-qubit model with the following un-normalized modular Hamiltonian
\begin{equation}
 \begin{aligned}
  H&=a \, Z\otimes\id+b\, \id\otimes Z
  +g \, (X\otimes X-Y\otimes Y),\\
  \tau_{AB}&=\frac{\e^{-H}}{\Tr\e^{-H}}.
 \end{aligned}
 \label{eq:main-qubit-H}
\end{equation}
where $X,Y,Z$ denote the Pauli matrices. Below we will work in the computational basis
$\{|00\rangle,|01\rangle,|10\rangle,|11\rangle\}$. The matrix form of $H$ is
\begin{equation}
 H(a,b,g)=
 \begin{pmatrix}
 u&0&0&2g\\
 0&v&0&0\\
 0&0&-v&0\\
 2g&0&0&-u
 \end{pmatrix},
 \label{eq:qubit-general-H}
\end{equation}
 where
\begin{equation}
 u=a+b,\quad v=a-b
 \label{eq:qubit-uvr}
\end{equation}
We will see that both signs of the local MES already occur in this two-qubit model. Set $r=(u^2+4g^2)^{1/2}$, the marginals are diagonal,
$\tau_A=\operatorname{diag}(\alpha,1-\alpha)$ and
$\tau_B=\operatorname{diag}(\beta,1-\beta)$, with
\begin{equation}
 \begin{aligned}
  \alpha&=\frac{A_r+\e^{-v}}{Z_\tau},&
  \beta&=\frac{A_r+\e^{v}}{Z_\tau},\\
  A_r&=\cosh r-\frac{u}{r}\sinh r,&
  Z_\tau&=2(\cosh r+\cosh v).
 \end{aligned}
 \label{eq:main-alpha-beta}
\end{equation}
The initial product state is $\sigma_{AB}
 =\operatorname{diag}(p_{00},p_{01},p_{10},p_{11})$ with
\begin{align}
p_{00}&=\alpha\beta,&p_{01}&=\alpha(1-\beta),\nonumber\\
 p_{10}&=(1-\alpha)\beta,&p_{11}&=(1-\alpha)(1-\beta).
 \label{eq:main-qubit-product-populations}
\end{align}
The modular flowed state becomes 
\begin{equation}
 \omega_{AB}(s)=
 \begin{pmatrix}
  p_{00}+\delta&0&0&\gamma\\
  0&p_{01}&0&0\\
  0&0&p_{10}&0\\
  \gamma^*&0&0&p_{11}-\delta
 \end{pmatrix},
 \label{eq:main-qubit-density-matrix}
\end{equation}
where
\begin{equation}
 \begin{aligned}
  \delta&=\frac{4g^2}{r^2}\sin^2(rs)(p_{11}-p_{00}),\\
  \gamma&=\ii\frac{2g}{r}\sin(rs) \Big(\cos(rs)-\ii\frac{u}{r}\sin(rs)\Big)(p_{00}-p_{11}).
 \end{aligned}
 \label{eq:main-qubit-delta-gamma}
\end{equation}
Taking the two partial traces and 
writing $h_2(x) \equiv -x\log x-(1-x)\log(1-x)$, we obtain the two MES are 
\begin{equation}
\begin{aligned}
    \Delta S_A &= h_2(\alpha+\delta)-h_2(\alpha), \\
 \Delta S_B &= h_2(\beta+\delta)-h_2(\beta).
\end{aligned}  \label{eq:main-qubit-mes}
\end{equation}

\begin{figure*}[t]
 \centering
 \includegraphics[width=0.46\textwidth]{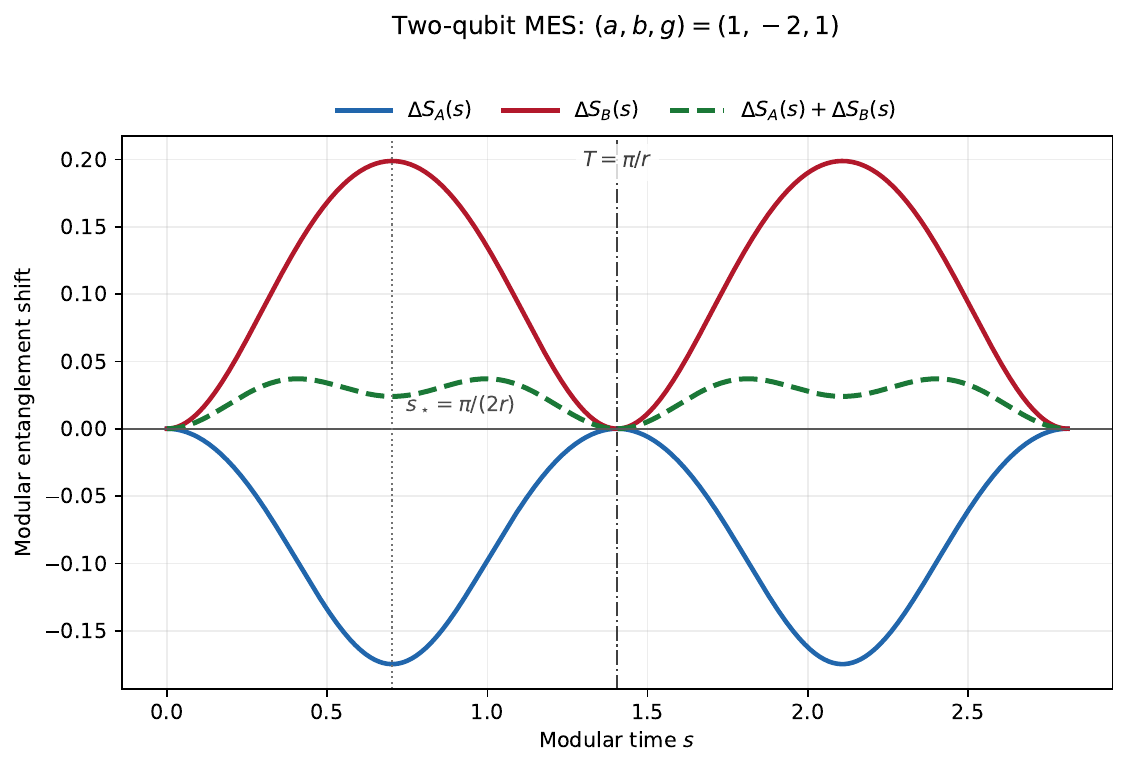}
 \hfill
 \includegraphics[width=0.46\textwidth]{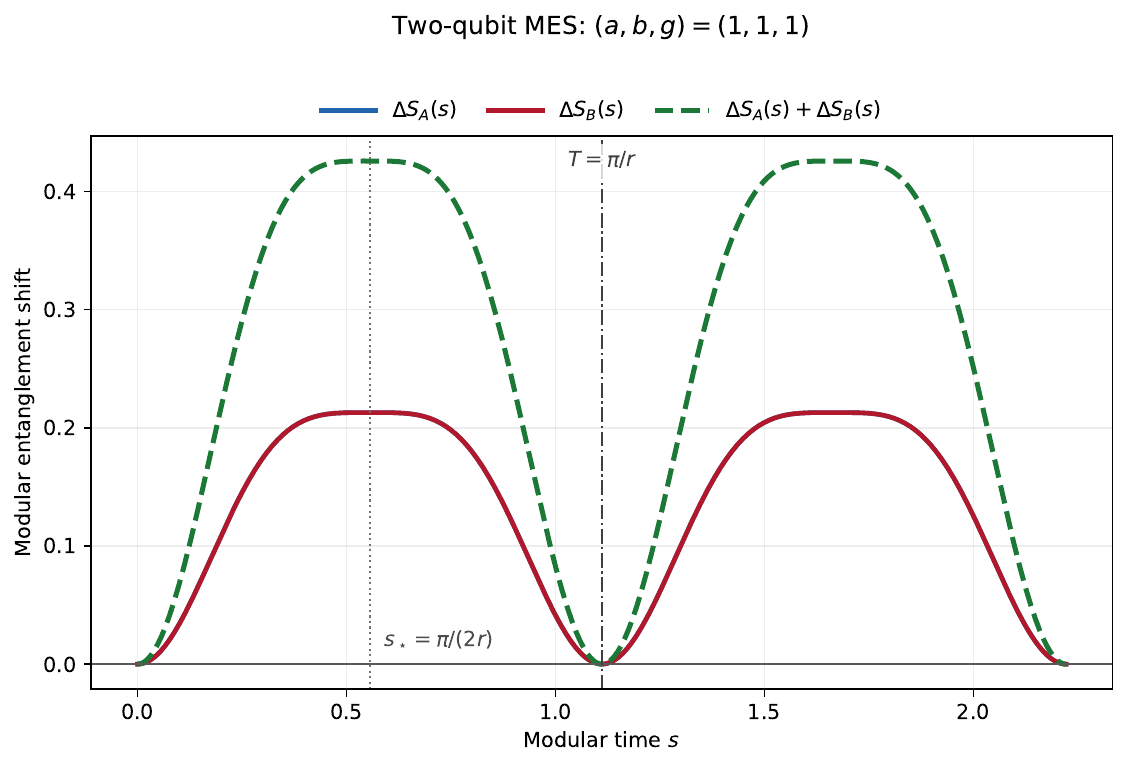}
 \caption{Modular-time dependence of the two local MES values and their
 sum.  Left: $(a,b,g)=(1,-2,1)$ exhibits local purification,
 $\Delta S_A<0<\Delta S_B$.  Right: for the exchange-symmetric point
 $(a,b,g)=(1,1,1)$, the two shifts coincide and are non-negative.  The
 dashed curve is $I(A:B)_{\omega(s)}$ by Eq.~\eqref{eq:main-balance}; the
 dotted and dash-dotted lines mark $s_\star=T/2$ and $T=\pi/r$.}
 \label{fig:main-qubit}
\end{figure*}

At the asymmetric point $(a,b,g)=(1,-2,1)$ and
$s_\star=\pi/(2\sqrt5)$,
\begin{equation}
 \Delta S_A\simeq-0.174826,\qquad
 \Delta S_B\simeq0.198799,
 \label{eq:main-asym-values}
\end{equation}
while their sum is $0.023973>0$.  Negative local MES is therefore not an
exceptional or perturbative effect.  At the symmetric point $a=b=g=1$,
exchange symmetry gives $\alpha=\beta$ and hence
\begin{equation}
 \Delta S_A(s)=\Delta S_B(s)
 =\frac12 I(A:B)_{\omega(s)}\geq0.
 \label{eq:main-qubit-symmetric}
\end{equation}
At $s_\star=\pi/(4\sqrt2)$ both marginals are maximally mixed and each MES
is approximately $0.212912$.  

The modular orbits of these two cases are plotted in Fig.\ref{fig:main-qubit}
The orbit has period $T=\pi/r$, this periodicity follows directly from
\cref{eq:main-qubit-delta-gamma}.  Both $\delta(s)$ and $\gamma(s)$ have period
\begin{equation}
 T=\frac{\pi}{r},
 \label{eq:qubit-recurrence-period}
\end{equation}
and hence
\begin{equation}
 \begin{aligned}
  \omega_{AB}(s+T)
  &=\omega_{AB}(s),\\
  \Delta S_X(s+T)
  &=\Delta S_X(s),
  \qquad X=A,B.
 \end{aligned}
 \label{eq:qubit-mes-periodicity}
\end{equation}
The recurrence has a simple spectral origin.  In the eigenbasis of
$K_\tau$, the modular evolution is governed by phases
$\e^{-\ii(\kappa_m-\kappa_n)s}$.  In the present model, the only active
modular gap is the level splitting $2r$ of the even-parity block, and hence
\begin{equation}
 T=\frac{2\pi}{2r}=\frac{\pi}{r}.
\end{equation}

\section{MES positivity for two  intervals in a 2D CFT vacuum}
\label{sec:main-cft}
Exchange symmetry has a nontrivial continuum realization.  Let
$A=[x_1,x_2]$ and $B=[x_3,x_4]$ be disjoint intervals on the compactified
line, ordered as $x_1<x_2<x_3<x_4$.  A real M\"obius transformation $g$
maps their endpoints to
\begin{equation}
 (x_1,x_2,x_3,x_4)\xmapsto{g}(-1,-q,q,1),
 \qquad 0<q<1.
 \label{eq:main-canonical-pair}
\end{equation}
For $u=x_{12}x_{34}/(x_{13}x_{24})$, the canonical parameter is
$q=(1-\sqrt u)/(1+\sqrt u)$.  Thus the exchange exists for every
cross-ratio, not only for equal-length or reflection-symmetric intervals.
In this frame the global conformal map $j_q(x)=-q/x$ exchanges the two
intervals.  Therefore, we have
\begin{equation}
 \begin{aligned}
  h&=g^{-1}\circ j_q\circ g,& h^2&=\mathrm{id},\\
  h(A)&=B,& h(B)&=A,
 \end{aligned}
 \label{eq:main-involution}
\end{equation}
Now, we will use this conformal  involution to prove the positivity of each MES. In a conformal adapted type-I regulator case,  the argument is immediate. Let $U(h)$ denote the unitary implementing the conformal
involution. Since $h$ preserves $A\cup B$ and the vacuum is invariant under
global conformal transformations,
\begin{equation}
  U(h)\tau_{AB}U(h)^\dagger=\tau_{AB}.
\end{equation}
It follows that $U(h)$ commutes with
$K_\tau=-\log\tau_{AB}$, and hence with the reference modular flow.
Moreover, $U(h)$ exchanges the two tensor factors and maps $\tau_A$ to
$\tau_B$, so that it leaves
$\sigma_{AB}=\tau_A\otimes\tau_B$ invariant. Consequently, the flowed
marginals $\omega_A(s)$ and $\omega_B(s)$ are unitarily equivalent, implying
\begin{equation}
  \Delta S_A(s)=\Delta S_B(s).
\end{equation}
Together with Eq.~\eqref{eq:main-balance}, this gives
\begin{equation}
  \Delta S_A(s)=\Delta S_B(s)
  =\frac{1}{2}I(A:B)_{\omega(s)}\geq 0.
\end{equation}

For completeness, we now give the intrinsic continuum proof.
Let
$M_A=\mathcal A(A)$, $M_B=\mathcal A(B)$, and $M=M_A\vee M_B$.  For
disjoint closures the split property supplies the spatial identification
$M\simeq M_A\bar\otimes M_B$ and hence the faithful product state
$\sigma=\tau_A\otimes\tau_B$, where $\tau$ is the vacuum restricted to
$M$ \cite{Doplicher:1984split,Morinelli:2016split,Longo:2017lct}.  The
intrinsic mutual information and modular orbit are
\begin{align}
I_\varphi(A:B)&:=D_M(\varphi\Vert\varphi_A\otimes\varphi_B),
 \label{eq:main-araki-mi}\\
 \omega_s&:=\sigma\circ\vartheta_{-s}^{\tau}.
 \label{eq:main-algebraic-orbit}
\end{align}
where $\vartheta_s^\tau$ is the modular automorphism group and $D_M$ is
Araki relative entropy \cite{Araki:1975rel}.  The split property is used only
to define the product state. We assume that $D_M(\omega_s\Vert\sigma)<\infty$ for the CFT under
consideration,
we therefore define the algebraic MES, by
\begin{equation}
 \widetilde{\Delta\mathcal S}_X(s)
 :=\frac12D_M(\omega_s\Vert\sigma)
 -D_{M_X}(\omega_{X,s}\Vert\tau_X).
 \label{eq:main-algebraic-mes}
\end{equation}
\begin{theorem}[Algebraic MES under conformal exchange]
For the vacuum of a two-dimensional CFT satisfying the split property,
suppose that the interval-exchange involution $h$ is unitarily implemented
and leaves the vacuum invariant. Then, for every $s\in\mathbb{R}$, the two
algebraic MES quantities satisfy
\begin{equation}
  \widetilde{\Delta\mathcal S}_A(s)
  =
  \widetilde{\Delta\mathcal S}_B(s)
  =
  \frac{1}{2}I_{\omega_s}(A:B)
  \geq 0.
  \label{eq:main-cft-positivity}
\end{equation}
\end{theorem}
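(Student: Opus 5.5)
The argument has three parts: an algebraic identity relating the relative entropies, which supplies the balance law; symmetry under the involution $h$, which equates the two shifts; and positivity of Araki relative entropy, which gives the sign.

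\textbf{Step 1: the chain rule.} The intrinsic analogue of the identity $S(\omega_{AB})=S(\sigma_{AB})$ is
\begin{equation*}
 D_M(\omega_s\Vert\sigma)=D_M(\omega_s\Vert\omega_{A,s}\otimes\omega_{B,s})+D_{M_A}(\omega_{A,s}\Vert\tau_A)+D_{M_B}(\omega_{B,s}\Vert\tau_B).
\end{equation*}
This holds because $\sigma=\tau_A\otimes\tau_B$ is a product state on $M\simeq M_A\bar\otimes M_B$. In type I it is immediate: expand $\Tr\omega(\log\omega-\log\tau_A\otimes\id-\id\otimes\log\tau_B)$ and add and subtract $\log\omega_A+\log\omega_B$. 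For type III I would use the Araki chain rule for product reference states, namely additivity of relative entropy under tensor products together with the conditional-expectation-free identity $D(\omega\Vert\phi_1\otimes\phi_2)=D(\omega\Vert\omega_1\otimes\omega_2)+D(\omega_1\Vert\phi_1)+D(\omega_2\Vert\phi_2)$. I would prove this by a type-I approximation, or via Petz's formula using relative modular operators of product states, $\Delta_{\omega_1\otimes\omega_2,\phi_1\otimes\phi_2}=\Delta_{\omega_1,\phi_1}\otimes\Delta_{\omega_2,\phi_2}$, and the cocycle factorization $(D\omega:D\sigma)_t=(D\omega:D(\omega_A\otimes\omega_B))_t\,((D\omega_A:D\tau_A)_t\otimes(D\omega_B:D\tau_B)_t)$. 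Finiteness of every term follows from the assumption $D_M(\omega_s\Vert\sigma)<\infty$, since all three terms are nonnegative.

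\textbf{Step 2: the balance law.} From the chain rule, summing Eq.~\eqref{eq:main-algebraic-mes} over $X=A,B$ gives
\begin{equation*}
 \widetilde{\Delta\mathcal S}_A(s)+\widetilde{\Delta\mathcal S}_B(s)=D_M(\omega_s\Vert\sigma)-D_{M_A}(\cdot)-D_{M_B}(\cdot)=I_{\omega_s}(A:B).
\end{equation*}
This is the intrinsic version of Theorem~1.

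\textbf{Step 3: exchange symmetry.} Let $U=U(h)$. Since $h(A)=B$ and $h(B)=A$, the map $\alpha=\mathrm{Ad}\,U$ satisfies $\alpha(M_A)=M_B$ and $\alpha(M_B)=M_A$, so $\alpha(M)=M$. Vacuum invariance gives $\tau\circ\alpha=\tau$ on $M$ and $\tau_B\circ\alpha|_{M_A}=\tau_A$.

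First, $\sigma\circ\alpha=\sigma$. The split identification must be chosen compatibly with $\alpha$, so that $\alpha$ acts as the flip on $M_A\bar\otimes M_B$. For elementary products $ab$ with $a\in M_A$, $b\in M_B$, we have $\sigma(\alpha(ab))=\tau_B(\alpha(a))\,\tau_A(\alpha(b))=\tau_A(a)\tau_B(b)$, and this is all that is needed.

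Second, uniqueness of the modular group gives $\alpha\circ\vartheta^\tau_s=\vartheta^\tau_s\circ\alpha$, because $\tau\circ\alpha=\tau$. Hence $\omega_s\circ\alpha=\sigma\circ\vartheta_{-s}\circ\alpha=\sigma\circ\alpha\circ\vartheta_{-s}=\omega_s$. Restricting to $M_A$ gives $\omega_{B,s}\circ\alpha|_{M_A}=\omega_{A,s}$.

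Araki relative entropy is invariant under $*$-isomorphisms, so $D_{M_A}(\omega_{A,s}\Vert\tau_A)=D_{M_B}(\omega_{B,s}\Vert\tau_B)$. The definition in Eq.~\eqref{eq:main-algebraic-mes} then shows $\widetilde{\Delta\mathcal S}_A(s)=\widetilde{\Delta\mathcal S}_B(s)$. Combined with Step 2, each shift equals $\tfrac12 I_{\omega_s}(A:B)$, which is nonnegative by positivity of Araki relative entropy.

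\textbf{Main obstacle.} I expect the hardest step to be the type-III chain rule in Step 1; in particular, its finiteness bookkeeping and the claim that each term is well defined. I would try the cocycle-factorization route first. A second, subtler point is that $\alpha$ acts as the flip under the split isomorphism. This point is sidestepped by defining $\sigma$ intrinsically as the product state $\sigma(ab)=\tau(a)\tau(b)$, which the split property makes normal; the flip property then follows from the elementary computation in Step 3 and weak continuity.
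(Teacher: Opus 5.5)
Your proposal is correct and follows the paper's proof. In both, the chain rule for Araki relative entropy with the product reference $\sigma=\tau_A\otimes\tau_B$ gives the balance law, modular covariance under the involution gives $\omega_s\circ\alpha_h=\omega_s$, and invariance of relative entropy under the normal isomorphism $M_A\to M_B$ equates the two local terms. You only put the chain rule first and fill in points the paper takes for granted: the type-III justification of the chain rule, and checking $\sigma\circ\alpha_h=\sigma$ on products $ab$ and extending by normality.
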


\begin{proof}
Let $\alpha_h:=\operatorname{Ad}_{U(h)}$ denote the automorphism induced by
the conformal involution $h$, with
\begin{equation}
  \alpha_h(x)=U(h)xU(h)^*,
  \qquad x\in M_{A\cup B}.
\end{equation}
M\"obius covariance implies that $\alpha_h$
exchanges $M_A$ and $M_B$ while preserving their union. Vacuum invariance,
together with the covariance of modular automorphisms, gives
\begin{equation}
  \tau\circ\alpha_h^{-1}=\tau,
  \qquad
  \alpha_h\circ\vartheta_t^\tau\circ\alpha_h^{-1}
  =
  \vartheta_t^{\,\tau\circ\alpha_h^{-1}}
  =
  \vartheta_t^\tau.
  \label{eq:main-modular-covariance}
\end{equation}
Since $\alpha_h$ exchanges the marginal states $\tau_A$ and $\tau_B$, it
also leaves their product state invariant:
\begin{equation}
  \sigma\circ\alpha_h^{-1}=\sigma.
\end{equation}
Using Eq.~\eqref{eq:main-algebraic-orbit} and
Eq.~\eqref{eq:main-modular-covariance}, we therefore obtain
\begin{align}
  \omega_s\circ\alpha_h^{-1}
  &=
  \sigma\circ\vartheta_{-s}^\tau\circ\alpha_h^{-1}
  \nonumber\\
  &=
  \sigma\circ\alpha_h^{-1}\circ\vartheta_{-s}^\tau
  =
  \omega_s.
  \label{eq:main-orbit-invariance}
\end{align}

The restriction
\begin{equation}
  \beta_h
  :=
  \left.\alpha_h^{-1}\right|_{M_A}
  :
  M_A\longrightarrow M_B
\end{equation}
is a normal isomorphism satisfying
\begin{equation}
  \tau_A=\tau_B\circ\beta_h,
  \qquad
  \omega_{A,s}=\omega_{B,s}\circ\beta_h.
\end{equation}
The invariance of Araki relative entropy under normal isomorphisms then
implies
\begin{equation}
  D_{M_A}\!\left(\omega_{A,s}\Vert\tau_A\right)
  =
  D_{M_B}\!\left(\omega_{B,s}\Vert\tau_B\right).
  \label{eq:main-local-relative-equality}
\end{equation}
By the definition of the algebraic MES, this equality gives
\begin{equation}
  \widetilde{\Delta\mathcal S}_A(s)
  =
  \widetilde{\Delta\mathcal S}_B(s).
\end{equation}
Since $\sigma=\tau_A\otimes\tau_B$, the chain rule for Araki relative
entropy gives
\begin{equation}
  D_M(\omega_s\Vert\sigma)
  =
  I_{\omega_s}(A:B)
  +D_{M_A}(\omega_{A,s}\Vert\tau_A)
  +D_{M_B}(\omega_{B,s}\Vert\tau_B).
\end{equation}
Therefore the sum of algebraic MES is
\begin{equation}
  \widetilde{\Delta\mathcal S}_A(s)
  +\widetilde{\Delta\mathcal S}_B(s)
  =
  I_{\omega_s}(A:B)\geq0.
\end{equation}
Then, each algebraic MES equals one half
of the mutual information, proving
Eq.~\eqref{eq:main-cft-positivity}.
\end{proof}

Note that $\widetilde{\Delta\mathcal S}_X$ denotes the algebraic, relative-entropy
quantity and should not, in general, be identified with the type-I entropy
shift
\begin{equation}
  \Delta S_X(s)
  :=
  S\!\left(\omega_X(s)\right)-S(\tau_X).
\end{equation}
Indeed, in a type-I realization the two definitions are related by
\begin{align}
  \widetilde{\Delta\mathcal S}_A(s)
  &=
  \Delta S_A(s)
  +
  \frac{1}{2}
  \left(
    \Delta\langle K_B\rangle_s
    -
    \Delta\langle K_A\rangle_s
  \right),
  \\
  \widetilde{\Delta\mathcal S}_B(s)
  &=
  \Delta S_B(s)
  +
  \frac{1}{2}
  \left(
    \Delta\langle K_A\rangle_s
    -
    \Delta\langle K_B\rangle_s
  \right).
\end{align}
The conformal involution symmetry enforces
\begin{equation}
  \Delta\langle K_A\rangle_s
  =
  \Delta\langle K_B\rangle_s,
\end{equation}
so that, for a conformally adapted type-I regulator,
\begin{equation}
  \widetilde{\Delta\mathcal S}_X(s)=\Delta S_X(s),
  \qquad X=A,B.
\end{equation}

\section{Connes cocycle interpretation}
\label{sec:main-cocycle}
The modular orbit in Eq.~\eqref{eq:main-algebraic-orbit} is precisely a
Connes-cocycle orbit.  To make this identification explicit, consider the
Connes cocycle derivative associated with the ordered pair $(\tau,\sigma)$,
\begin{equation}
 u_s \equiv [D\tau:D\sigma]_s =\Delta_{\tau\mid\sigma}^{\ii s}
   \Delta_\sigma^{-\ii s}
 \in M.
 \label{eq:main-intrinsic-cocycle}
\end{equation}
where $\Delta_{\tau\mid\sigma}$ is the relative modular operator associated
with the ordered pair $(\tau,\sigma)$, and
$\Delta_\sigma\equiv\Delta_{\sigma\mid\sigma}$ is the modular operator of $\sigma$. 
The Connes cocycle derivative satisfies the standard intertwining relation
\begin{equation}
 \vartheta_s^\tau
 =\operatorname{Ad}(u_s)\circ\vartheta_s^\sigma .
 \label{eq:main-cocycle-intertwining}
\end{equation}
Taking the inverse of this relation gives
\begin{equation}
 \vartheta_{-s}^\tau
 =\vartheta_{-s}^\sigma\circ\operatorname{Ad}(u_s^*).
 \label{eq:main-negative-intertwining}
\end{equation}
Since $\sigma$ is invariant under its own modular flow,
\begin{align}
 \omega_s
 &=\sigma\circ\vartheta_{-s}^\tau \nonumber\\
 &=\sigma\circ\vartheta_{-s}^\sigma
   \circ\operatorname{Ad}(u_s^*) \nonumber\\
 &=\sigma\circ\operatorname{Ad}(u_s^*).
 \label{eq:main-cocycle-orbit}
\end{align}
Thus the modular orbit is exactly the state orbit generated by the Connes
cocycle.

For comparison, in the finite-dimensional type-I setting, identifying
states with their density matrices, the same cocycle takes the form
\begin{equation}
 u_s
 =\tau_{AB}^{\ii s}\sigma_{AB}^{-\ii s}
 =e^{-\ii sK_\tau}
  e^{\ii s(K_A\otimes\id_B+\id_A\otimes K_B)} .
 \label{eq:main-cocycle-type-I}
\end{equation}
Its action on the product state is
\begin{align}
 u_s\sigma_{AB}u_s^*
 &=\tau_{AB}^{\ii s}\sigma_{AB}\tau_{AB}^{-\ii s}
 \nonumber\\
 &=e^{-\ii sK_\tau}\sigma_{AB}e^{\ii sK_\tau}
 =\omega_{AB}(s),
 \label{eq:main-cocycle-orbit}
\end{align}
which recovers Eq.~\eqref{eq:main-orbit}. 

\section{Discussion and outlook}
\label{sec:discussion}

We have investigated how the modular Hamiltonian of a correlated reference
state redistributes entropy when acting on the decorrelated product of its
marginals.  The resulting modular entanglement shifts need not have a
definite sign individually, but their sum is exactly the mutual information
generated along the orbit.  For two disjoint intervals in the vacuum of a
two-dimensional CFT, a conformal involution exchanging the intervals forces
the two algebraic shifts to coincide and hence to be separately
non-negative.  Finally, the orbit is identified intrinsically as the Connes
cocycle flow of the correlated reference state relative to the product of
its marginals.  

The appearance of mutual information in the balance law suggests a
connection with the perturbative emergence of nonlocal modular flow and
causal shadows.  Reference
\cite{Chen:2022cmr} analyzed the boundary shape response of mutual
information, while Ref.~\cite{Chen:2022eyi} related bulk mutual-information
corrections to nonlocal modular flow and the emergence of a perturbative
causal shadow.  Together with the leading small-cross-ratio modular
Hamiltonian of Ref.~\cite{Chandrasekaran:2021tkb}, it would be interesting to determine whether the resulting boundary
quantity is related to the displacement of the quantum extremal surface, the variation of generalized entropy, or another geometric measure of the perturbative causal shadow.

The Connes-cocycle interpretation opens a complementary holographic
direction.  A semiclassical bulk dual of Connes cocycle flow across a
boundary cut has been proposed in terms of a bulk kink transform
\cite{Bousso:2020yxi}.  The pair of states considered here, consisting of a correlated reference
state and the product of its marginals, differs from that setup, so the existing bulk construction does not apply directly. Determining
the bulk action of this particular cocycle may clarify whether the MES
captures a response of the entanglement wedge.  The proposed relation
between infinite-time cocycle flow and relational bulk reconstruction
\cite{Parrikar:2024rel} provides a further motivation for pursuing this
question.

\begin{acknowledgments}
L.C. thanks Huajia Wang for useful discussions at an early stage of this work.  L.C. is supported by the
Shuimu Tsinghua Scholar Program of Tsinghua University and the China
Postdoctoral Science Foundation under Grant No.~2025M783401. L.C. used OpenAI’s GPT-5.6 model to assist with technical derivations and manuscript preparation. All AI-assisted derivations and calculations were independently checked
by the author, who takes full
responsibility for the manuscript.
\end{acknowledgments}

\bibliography{ref}

\clearpage
\appendix
\setcounter{equation}{0}
\renewcommand{\theequation}{\Alph{section}\arabic{equation}}
\setcounter{figure}{0}
\renewcommand{\thefigure}{\Alph{section}\arabic{figure}}

\section{Quantum channels and modular orbits: a comparison}
\label{sec:background}

We compare the modular-flowed map with a fixed quantum channel, since the
entropy inequality for the latter motivates the search for an analogue of
the former.

A quantum channel is a completely positive trace-preserving (CPTP) linear map
that sends quantum states to quantum states.  In finite dimensions it has
the form
\begin{equation}
 \mathcal N_{A\to B}:\cL(\cH_A)\longrightarrow\cL(\cH_B).
\end{equation}
Consider now a channel on $A$ obtained by coupling $A$ to a fixed
environment state $\rho_B$ through a unitary $U$ on $\cH_A\otimes\cH_B$:
\begin{equation}
 \Phi_A(\rho_A)
 =\Tr_B\!\left[U(\rho_A\otimes\rho_B)U^\dagger\right].
 \label{eq:reduced-channel}
\end{equation}
We ask when
\begin{equation}
 S(\Phi_A(\rho_A))\geq S(\rho_A)
 \label{eq:channel-entropy-increase}
\end{equation}
holds for every input state $\rho_A$.  Recall that the quantum relative
entropy is
\begin{equation}
 D(\rho_A\Vert\sigma_A)
 =\Tr\!\left[\rho_A(\log\rho_A-\log\sigma_A)\right],
 \label{eq:relative-entropy}
\end{equation}
and that for the maximally mixed state $\pi_A=\id_A/d_A$ the entropy admits
the rewriting
\begin{equation}
 S(\rho_A)=\log d_A-D(\rho_A\Vert\pi_A).
 \label{eq:entropy-relative-max-mixed}
\end{equation}
If $\Phi_A$ is unital, so that $\Phi_A(\pi_A)=\pi_A$, the data-processing
inequality gives
$D(\Phi_A(\rho_A)\Vert\pi_A)\leq D(\rho_A\Vert\pi_A)$, which is equivalent
to \cref{eq:channel-entropy-increase}.  Conversely, applying
\cref{eq:channel-entropy-increase} to $\pi_A$ forces $\Phi_A(\pi_A)$ to have
the maximal entropy $\log d_A$.  Thus
\begin{equation}
 S(\Phi_A(\rho_A))\geq S(\rho_A)\ \forall\, \rho_A
 \;\Longleftrightarrow\;
 \Phi_A(\pi_A)=\pi_A.
 \label{eq:unital-characterization}
\end{equation}

The modular-flowed map of falls outside this
framework.  For fixed reference data $\tau_{AB}$ and modular time $s$,
the reduced map
\begin{equation}
 \Phi^{(\tau)}_{A,s}(\rho_A)
 =\Tr_B\!\left[
 \e^{-\ii sK_\tau}(\rho_A\otimes\tau_B)\e^{\ii sK_\tau}
 \right]
 \label{eq:fixed-modular-channel}
\end{equation}
is indeed CPTP, but changing the reference state $\tau_{AB}$ simultaneously
changes the input $\tau_A\otimes\tau_B$, the environment $\tau_B$, and the
generator $K_\tau$.  Hence the assignment from reference data to
modular-flowed data is not described by a single fixed CPTP channel.

\section{Modular BKM susceptibility}
The modular orbit has the expansion
\begin{equation}
 \begin{aligned}
  \omega_{AB}(s)&=\sigma_{AB}+sX+O(s^2),\\
  X&:=\dot\omega_{AB}(0)=-\ii[K_\tau,\sigma_{AB}].
 \end{aligned}
 \label{eq:short-time-orbit-tangent}
\end{equation}
Let
$X_A:=\Tr_BX$ and $X_B:=\Tr_AX$.  The product of the flowed marginals,
\begin{equation}
 \pi_{AB}(s):=\omega_A(s)\otimes\omega_B(s),
\end{equation}
starts from the same state and has the expansion
\begin{equation}
 \begin{aligned}
  \pi_{AB}(s)&=\sigma_{AB}+sX_{\mathrm{prod}}+O(s^2),\\
  X_{\mathrm{prod}}&:=X_A\otimes\tau_B+\tau_A\otimes X_B.
 \end{aligned}
 \label{eq:short-time-product-tangent}
\end{equation}
To expand the relative entropy, introduce the integral representation of the
matrix-logarithm derivative,
\begin{equation}
 \begin{aligned}
  \mathcal T_\rho(Z)
  &:=\left.\frac{\dd}{\dd\epsilon}
  \log(\rho+\epsilon Z)\right|_{\epsilon=0}\\
  &=\int_0^\infty (\rho+t\id)^{-1}Z(\rho+t\id)^{-1}\,\dd t,
 \end{aligned}
 \label{eq:bkm-log-derivative}
\end{equation}
which satisfies
$\Tr\!\left[\rho\,\mathcal T_\rho(Z)\right]=\Tr Z$.
For two normalized curves
$\rho(s)=\rho+sU+O(s^2)$ and $\eta(s)=\rho+sV+O(s^2)$ through the same
faithful state, direct differentiation of $D(\rho(s)\Vert\eta(s))$ gives
\begin{align}
 \left.\frac{\dd}{\dd s}
 D(\rho(s)\Vert\eta(s))\right|_{s=0}
 &=\Tr\!\left[
 \rho\bigl(\mathcal T_\rho(U)-\mathcal T_\rho(V)\bigr)
 \right]\nonumber\\
 &=\Tr(U-V)=0.
 \label{eq:relative-entropy-first-variation}
\end{align}
Because
$I(A:B)_{\omega(s)}=
 D\!\left(\omega_{AB}(s)\middle\Vert\pi_{AB}(s)\right)$,
\cref{eq:relative-entropy-first-variation} implies
\begin{equation}
  I'(0)=0.
 \label{eq:entropy-sum-first-derivative}
\end{equation}

In fact, the two local MES values are separately stationary.  For
subsystem $A$,
\begin{align}
 \Delta S_A'(0)
 &=-\Tr_A(X_A\log\tau_A)\nonumber\\
 &=\ii\Tr_{AB}\!\left(
 K_\tau[\sigma_{AB},\log\tau_A\otimes\id_B]\right)=0,
 \label{eq:local-first-derivative}
\end{align}
because $\sigma_{AB}=\tau_A\otimes\tau_B$ commutes with
$\log\tau_A\otimes\id_B$.  The same argument gives
\begin{equation}
 \Delta S_A'(0)=\Delta S_B'(0)=0.
 \label{eq:both-local-first-derivatives}
\end{equation}

For the same two curves through $\rho$, the direct second-order expansion is
\begin{equation}
 \begin{aligned}
  &D(\rho+sU+O(s^2)\Vert\rho+sV+O(s^2))\\
  &\qquad=\tfrac{s^2}{2}\,\Tr\!\left[
  (U-V)\mathcal T_\rho(U-V)\right]+O(s^3).
 \end{aligned}
 \label{eq:relative-entropy-second-variation}
\end{equation}
 Let us denote the eigenvalues of $\rho$ as $\{p_{j}\}$, then $\rho=\sum_jp_j|j\rangle\langle j|$ and
\begin{equation}
 \Tr\!\left[Z\mathcal T_\rho(Z)\right]
 =\sum_{j,k}
 \frac{\log p_j-\log p_k}{p_j-p_k}|Z_{jk}|^2\geq0,
 \label{eq:positive-quadratic-form}
\end{equation}
with the convention that the diagonal summands reduce to
$\sum_j |Z_{jj}|^2/p_j$ when $p_j=p_k$.  Applied to
$\omega_{AB}(s)$ and $\pi_{AB}(s)$ this gives
\begin{equation}
 I(A:B)_{\omega(s)}
 =\tfrac{s^2}{2}\,\chi_{\mathrm{mod}}(\tau)+O(s^3),
 \label{eq:bkm-mutual-information-expansion}
\end{equation}
where
\begin{equation}
 \chi_{\mathrm{mod}}(\tau)
 :=\Tr\!\left[
 X_{\mathrm{corr}}\,
 \mathcal T_{\sigma_{AB}}(X_{\mathrm{corr}})
 \right]
 \geq0.
 \label{eq:bkm-modular-susceptibility}
\end{equation}
\begin{equation}
 X_{\mathrm{corr}}
 :=X-X_{\mathrm{prod}}
 =X-X_A\otimes\tau_B-\tau_A\otimes X_B.
 \label{eq:bkm-correlation-tangent}
\end{equation}

Equivalently,
\begin{equation}
 \Delta S_A''(0)+\Delta S_B''(0)
 =\chi_{\mathrm{mod}}(\tau)\geq0.
 \label{eq:bkm-local-curvature-balance}
\end{equation}
Thus the perturbative result constrains the sum of the two local entropy
curvatures; it does not require the two individual second derivatives to
have the same sign.

For the two-qubit model,
$X_A=X_B=0$ and only the $(00,11)$ matrix elements contribute.  Therefore
\begin{equation}
 \chi_{\mathrm{mod}}(\tau)
 =8g^2(p_{00}-p_{11})\log\frac{p_{00}}{p_{11}},
 \label{eq:bkm-qubit-susceptibility}
\end{equation}
and
\begin{equation}
 I(A:B)_{\omega(s)}
 =4g^2(p_{00}-p_{11})\log\frac{p_{00}}{p_{11}}\,s^2
 +O(s^3).
 \label{eq:bkm-qubit-mutual-information}
\end{equation}

\section{Explicit two-qubit illustrations}
\label{sec:explicit-qubit}
The matrix form of $H$ is
\begin{equation}
 H(a,b,g)=
 \begin{pmatrix}
 u&0&0&2g\\
 0&v&0&0\\
 0&0&-v&0\\
 2g&0&0&-u
 \end{pmatrix},
 \label{eq:qubit-general-H}
\end{equation}
 where
\begin{equation}
 u=a+b,\quad v=a-b
 \label{eq:qubit-uvr}
\end{equation}
So only the even-parity subspace
$\operatorname{span}\{|00\rangle,|11\rangle\}$ is mixed.  Its block
\begin{equation}
    H_{\mathrm e}=\begin{pmatrix}
        u&2g\\2g&-u
    \end{pmatrix}
\end{equation}
satisfies $H_{\mathrm e}^2=r^2\id, \,\ r=\sqrt{u^2+4g^2}>0$, therefore, we have
\begin{equation}
 \e^{-H_{\mathrm e}}
 =\cosh r\,\id-\frac{\sinh r}{r}H_{\mathrm e}
 =\begin{pmatrix}
   A_r & C_r\\
   C_r & D_r
  \end{pmatrix},
 \label{eq:qubit-even-exponential}
\end{equation}
where
\begin{equation}
 \begin{aligned}
  A_r&=\cosh r-\frac{u}{r}\sinh r,\\
  C_r&=-\frac{2g}{r}\sinh r,\\
  D_r&=\cosh r+\frac{u}{r}\sinh r.
 \end{aligned}
\end{equation}
The reference state is obtained analytically as
\begin{align}
 \tau_{AB}&=\frac{1}{Z_\tau}
 \begin{pmatrix}
  A_r & 0 & 0 & C_r\\
  0 & \e^{-v} & 0 & 0\\
  0 & 0 & \e^{v} & 0\\
  C_r & 0 & 0 & D_r
 \end{pmatrix},
 \label{eq:qubit-general-tau}\\
 Z_\tau&=2(\cosh r+\cosh v). \nonumber
\end{align}
In particular, the modular Hamiltonian is 
\begin{equation}
 K_\tau=-\log\tau_{AB}=H+(\log Z_\tau)\id_{AB}.
 \label{eq:qubit-general-modular-Hamiltonian}
\end{equation}

The two reduced states are diagonal:
\begin{equation}
 \tau_A=\begin{pmatrix}\alpha&0\\0&1-\alpha\end{pmatrix},
 \qquad
 \tau_B=\begin{pmatrix}\beta&0\\0&1-\beta\end{pmatrix},
 \label{eq:qubit-general-marginals}
\end{equation}
with
\begin{equation}
 \alpha=\frac{A_r+\e^{-v}}{Z_\tau},
 \qquad
 \beta=\frac{A_r+\e^v}{Z_\tau}.
 \label{eq:qubit-alpha-beta}
\end{equation}
Consequently, the decorrelated input is
\begin{equation}
 \sigma_{AB}=\tau_A\otimes\tau_B
 =\operatorname{diag}(p_{00},p_{01},p_{10},p_{11}),
 \label{eq:qubit-general-sigma}
\end{equation}
with
\begin{equation}
 \begin{aligned}
  p_{00}&=\alpha\beta, &
  p_{01}&=\alpha(1-\beta),\\
  p_{10}&=(1-\alpha)\beta, &
  p_{11}&=(1-\alpha)(1-\beta).
 \end{aligned}
 \label{eq:qubit-product-populations}
\end{equation}

The scalar term $\log Z_{\tau} \id_{AB} $ in \cref{eq:qubit-general-modular-Hamiltonian} produces only
a global phase, so the density-matrix evolution can be calculated with
$\widetilde U_s=\e^{-\ii sH}$.  Define
\begin{equation}
 \begin{aligned}
  c_s&=\cos(rs)-\ii\frac{u}{r}\sin(rs),\\
  m_s&=-\ii\frac{2g}{r}\sin(rs),\\
  d_s&=c_s^*.
 \end{aligned}
\end{equation}
Then
\begin{equation}
 \widetilde U_s=
 \begin{pmatrix}
 c_s&0&0&m_s\\
 0&\e^{-\ii vs}&0&0\\
 0&0&\e^{\ii vs}&0\\
 m_s&0&0&d_s
 \end{pmatrix}.
 \label{eq:qubit-general-unitary}
\end{equation}
Since $\widetilde U_s$ couples only $|00\rangle$ and $|11\rangle$, direct multiplication gives 
\begin{equation}
 \omega_{AB}(s)=
 \begin{pmatrix}
  p_{00}+\delta(s)&0&0&\gamma(s)\\
  0&p_{01}&0&0\\
  0&0&p_{10}&0\\
  \gamma(s)^*&0&0&p_{11}-\delta(s)
 \end{pmatrix},
 \label{eq:qubit-general-omega}
\end{equation}
where $\delta(s)$ and $\gamma(s)$ are
\begin{equation}
 \begin{aligned}
  \delta(s)&=\frac{4g^2}{r^2}\sin^2(rs)\,(p_{11}-p_{00}), \\
  \gamma(s)&=\ii\frac{2g}{r}\sin(rs)\,c_s\,(p_{00}-p_{11}).
 \end{aligned}
\end{equation}
The odd-parity parameters $p_{01}$ and $p_{10}$ remain invariant along the orbit. Taking the partial traces yields
\begin{equation}
 \begin{aligned}
  \omega_A(s)&=\operatorname{diag}(\alpha+\delta(s),1-\alpha-\delta(s)),\\
  \omega_B(s)&=\operatorname{diag}(\beta+\delta(s),1-\beta-\delta(s)).
 \end{aligned}
 \label{eq:qubit-general-flowed-marginals}
\end{equation}

For a diagonal qubit state with eigenvalues $x$ and $1-x$, set
$h_2(x)=-x\log x-(1-x)\log(1-x)$.  The two MES values are therefore
\begin{equation}
 \begin{aligned}
  \Delta S_A(s)&=h_2(\alpha+\delta(s))-h_2(\alpha),\\
  \Delta S_B(s)&=h_2(\beta+\delta(s))-h_2(\beta).
 \end{aligned}
 \label{eq:qubit-general-entropy-changes}
\end{equation}

\section{The Connes cocycle derivative}
\label{app:connes-derivative}

In this appendix, we briefly review the Connes cocycle derivative and fix
the convention used in the main text.  For two faithful normal states
$\tau$ and $\sigma$ on $M$, the notation
\begin{equation}
 [D\tau:D\sigma]_s
\end{equation}
must be read as a single symbol: it denotes the Connes derivative of
$\tau$ with respect to $\sigma$.  The first entry, $\tau$, plays the role
of the numerator state, while the second entry, $\sigma$, is the reference
or denominator state.  

This notation is the noncommutative analogue of the Radon--Nikodym
derivative $d\tau/d\sigma$.  For example, if $M$ is commutative and the
two states are represented by probability densities $p$ and $q$, then
\begin{equation}
 [D\tau:D\sigma]_s
 =\left(\frac{p}{q}\right)^{\ii s}.
 \label{eq:app-cocycle-commutative}
\end{equation}
Thus the ordering of $\tau$ and $\sigma$ is essential.

To give the intrinsic definition, choose the standard representation of
$M$ and let $\Omega_\tau$ and $\Omega_\sigma$ be the natural-cone vectors
representing the two states.  The relative Tomita operator is defined on
$M\Omega_\sigma$ by
\begin{equation}
 S_{\tau\mid\sigma}(x\Omega_\sigma)
 =x^*\Omega_\tau ,
 \qquad x\in M,
 \label{eq:app-relative-tomita}
\end{equation}
and the corresponding relative modular operator is
\begin{equation}
 \Delta_{\tau\mid\sigma}
 =S_{\tau\mid\sigma}^*S_{\tau\mid\sigma}.
 \label{eq:app-relative-modular}
\end{equation}
Writing $\Delta_\sigma=\Delta_{\sigma\mid\sigma}$, the Connes cocycle
derivative is the strongly continuous unitary family
\begin{equation}
 [D\tau:D\sigma]_s
 :=\Delta_{\tau\mid\sigma}^{\ii s}
   \Delta_\sigma^{-\ii s}
 \in M.
 \label{eq:app-connes-definition}
\end{equation}
The cocycle derivative is characterized by the intertwining relation
\begin{equation}
 \vartheta_s^\tau(x)
 =[D\tau:D\sigma]_s\,
  \vartheta_s^\sigma(x)\,
  [D\tau:D\sigma]_s^*
 \label{eq:app-cocycle-intertwining}
\end{equation}
and the cocycle identity
\begin{equation}
 [D\tau:D\sigma]_{s+t}
 =[D\tau:D\sigma]_s\,
  \vartheta_s^\sigma\!
  \left([D\tau:D\sigma]_t\right).
 \label{eq:app-cocycle-law}
\end{equation}
The second identity also explains why the cocycle is generally not a
one-parameter group.

Setting $t=-s$ in Eq.~\eqref{eq:app-cocycle-law} gives
\begin{equation}
 [D\tau:D\sigma]_{-s}
 =\vartheta_{-s}^\sigma
  \left([D\tau:D\sigma]_s^*\right).
 \label{eq:app-cocycle-inverse}
\end{equation}

Finally, for a type-I algebra with density matrices $\rho_\tau$ and
$\rho_\sigma$, the abstract definition reduces to
\begin{equation}
 [D\tau:D\sigma]_s
 =\rho_\tau^{\ii s}\rho_\sigma^{-\ii s}.
 \label{eq:app-cocycle-type-I}
\end{equation}
This formula makes the ordering transparent.  Notice, however, that when
the two density matrices do not commute,
\begin{equation}
 \rho_\tau^{\ii s}\rho_\sigma^{-\ii s}
 \neq
 \left(\rho_\tau\rho_\sigma^{-1}\right)^{\ii s}
\end{equation}
in general.  Equation~\eqref{eq:app-cocycle-type-I} is therefore the
noncommutative replacement of the classical ratio
$(p/q)^{\ii s}$.

\end{document}